
\documentclass[letterpaper, 10 pt, conference]{ieeeconf}  

\IEEEoverridecommandlockouts                              
\overrideIEEEmargins

\usepackage{amsmath,amssymb,amsfonts,amsthm,color,float}
\usepackage{algorithmic}
\usepackage{graphicx}
\usepackage{textcomp}
 \usepackage{calc}
 \usepackage{tikz}
\usepackage{cases}
\usepackage{verbatim}
\usepackage{hyperref}
\usepackage{xcolor, soul}
\usepackage{nccmath}
\usepackage{bbm}
\usepackage{mathtools, nccmath}
\usepackage{cleveref}

\usepackage{enumitem}

\usetikzlibrary{arrows}
\usetikzlibrary{shapes}
 \usetikzlibrary{calc}
\usetikzlibrary{decorations.pathreplacing}
\usetikzlibrary{arrows,positioning}

\pagestyle{empty}
\overrideIEEEmargins

\theoremstyle{plain}
\newtheorem{thm}{Theorem}
\newtheorem{cor}{Corollary}
\newtheorem{prop}{Proposition}

\theoremstyle{definition}

\newtheorem{ex}{Example}

\theoremstyle{remark}

\newcommand{\N}{\mathbb{N}}  
\newcommand{\R}{\mathbb{R}}  
\newcommand{\gm}{\mathrm{G}}  
\newcommand{\Gm}{\mathcal{G}}  
\newcommand{\ps}{\mathcal{N}}  
\newcommand{\rs}{\mathcal{R}}  
\newcommand{\ac}{\mathcal{A}}  
\newcommand{\wf}{\mathrm{W}}  
\newcommand{\ui}{\mathrm{U}_i}  
\newcommand{\ur}{\mathrm{U}_r}  
\newcommand{\x}{\mathrm{Y}}  
\newcommand{\xtr}{y_{\mathrm{true}}} 
\newcommand{\poa}{\mathrm{PoA}}  
\newcommand{\ns}{\mathrm{ne}}  
\newcommand{\opt}{\mathrm{opt}}  
\newcommand*{\Perm}[2]{{}^{#1}\!P_{#2}}  
\newcommand{\cfdel}{B_{\delta}}  
\newcommand{\z}{\mathrm{Z}}  
\newcommand{\Z}{\mathcal{Z}}  
\newcommand{\dtr}{\delta_{\mathrm{true}}}  
\newcommand{\opu}{u^{\rm{opt}}_\delta}  
\newcommand{\cfdeltr}{B_{\dtr}}  

\newcommand{\mc}[1]{\mathcal{#1}}


\begin{document}
\title{Mission Level Uncertainty in Multi-Agent Resource Allocation}

\author{Rohit Konda \and Rahul Chandan \and Jason R. Marden
\thanks{R. Konda (\texttt{rkonda@ucsb.edu}), R. Chandan, and J. R. Marden are with the Department of Electrical and Computer Engineering at the University of California, Santa Barbara, CA. This work is supported by \texttt{ONR Grant \#N00014-20-1-2359} and \texttt{AFOSR Grant \#FA9550-20-1-0054}.}}

\maketitle
\thispagestyle{empty}


\begin{abstract}
In recent years, a significant research effort has been devoted to the design of distributed protocols for the control of multi-agent systems, as the scale and limited communication bandwidth characteristic of such systems render centralized control impossible. Given the strict operating conditions, it is unlikely that every agent in a multi-agent system will have local information that is consistent with the true system state. Yet, the majority of works in the literature assume that agents share perfect knowledge of their environment. This paper focuses on understanding the impact that inconsistencies in agents' local information can have on the performance of multi-agent systems. More specifically, we consider the design of multi-agent operations under a game theoretic lens where individual agents are assigned utilities that guide their local decision making. We provide a tractable procedure for designing utilities that optimize the efficiency of the resulting collective behavior (i.e., price of anarchy) for classes of set covering games where the extent of the information inconsistencies is known. In the setting where the extent of the informational inconsistencies is not known, we show -- perhaps surprisingly -- that underestimating the level of uncertainty leads to better price of anarchy than overestimating it.
\end{abstract}


\section{INTRODUCTION}

The widespread utilization of multi-agent architectures has led to increased interest in the design of distributed control algorithms. Advancements in this area could have significant societal impact, not limited to the detection of forest fires \cite{hefeeda2009forest}, placement of wireless sensors in smart grids \cite{liu2012wireless} and even performance art \cite{carlson2020perceiving}. The fundamental challenge in the control of multi-agent systems arises from the stringent requirements placed on their scalability, communication and privacy. As these requirements cannot be satisfied by a centralized approach, we must use distributed protocols where the agents act independently according to their local information. A common approach for distributed design is to cast the global system objective as an optimization problem where the aforementioned requirements are embedded as constraints \cite{nedic2014distributed}. Then, through careful consideration of the problem's structure, a distributed algorithm is designed that gives good approximate solutions. 

A commonly held assumption in the design of distributed protocols is that all the agents either have perfect knowledge on the underlying problem setting or quickly obtain it through communication (see, e.g., \cite{li2012designing}). However, in practice, an agent's knowledge on the \emph{ground truth} may be limited, especially in scenarios where accurate estimation of the true state of the environment is difficult. This prompts the following line of questioning:

\begin{itemize}[leftmargin=*]
    \setlength\itemsep{.3em}
    \item[\textendash] \emph{How robust is the performance of a given distributed control 
algorithm to inconsistencies in agents' knowledge?}
    \item[\textendash] \emph{Can distributed control algorithms be explicitly designed 
to be robust against inconsistencies in agents' knowledge?}
\end{itemize}

In this work, we investigate these two questions from a game theoretic perspective. Our main contribution is a general framework for evaluating the robustness of distributed control algorithms in which the agents' decision making is based on their own (possibly inaccurate) knowledge of the problem parameters. Each agent acts in its own self interest, maximizing its utility function according to its local knowledge of the underlying problem setting. The system performance guarantees are measured under the well studied notion of \emph{price of anarchy}~\cite{koutsoupias1999worst}, defined as the ratio between the system welfare at the worst system outcome and the system optimum. Here, the worst system outcome is defined as the worst emergent allocation of agents (i.e., pure Nash equilibrium) with respect to the worst possible disposition of the agents' knowledge. We then apply our framework to the class of set covering games~\cite{gairing2009covering} in the setting where each agent's estimate of the problem parameters lies within a bounded interval centered around the true system state.

The study of uncertainty among agents is not new in the field of game theory. In fact, John Harsanyi's seminal work on \emph{incomplete information games}~\cite{harsanyi1967games} was one of the first significant contributions to this field. The incomplete information model has been used to study multi-agent systems in a variety of contexts, including network games \cite{sagduyu2011jamming}, team decision making \cite{billard1999learning} and pursuit/evasion \cite{antoniades2003pursuit}. While this body of important work is relevant to our problem setting, our analysis departs from the incomplete information framework as, in our setting, agents do not possess probabilistic models of the system state and have only limited knowledge on the other agents' beliefs. Thus, the system operator must account for the worst possible scenario when designing the distributed protocols. The differences between the incomplete information framework and the framework proposed in this manuscript are roughly comparable to those between stochastic and robust optimization.

The literature on ``robust'' formulations is much more restricted. In \cite{weber1985method}, the authors consider how to generate a set of possible utility functions that are consistent with a limited amount of information. A distribution-free analysis of incomplete information games is considered in \cite{aghassi2006robust} through a proposed equilibrium concept. To the author's knowledge, the closest work is \cite{meir2015playing}, where price of anarchy results are considered in scenarios where the agents have biases in their perceived utilities. While similar in flavor, our paper studies utility design under a difference class of utility deviations that result from limited information scenarios.

In this paper, we propose a novel game theoretic framework for studying multi-agent systems that does not impose the assumption that agents possess perfect knowledge of the underlying problem setting. We apply our framework to the class of set covering games \cite{gairing2009covering} and study the performance of distributed control algorithms designed without the perfect knowledge assumption in this setting. In Section \ref{sec:model}, we introduce the problem of utility design in uncertain environments and define a relevant class of games in which to model uncertainties. In Section \ref{sec:poa}, we formulate tractable linear programs that (i) tightly characterize performance guarantees and (ii) characterize an optimal utility design in games with system uncertainties, extending the result in \cite{paccagnan2018distributed} to our incomplete information setting. In Section \ref{sec:setcov}, we restrict our attention to the important class of set covering games with uncertainties. In this setting, we obtain an exact characterization of the best achievable efficiency under varying levels of uncertainty. Proofs for all the results are given in the appendix.

\noindent \textbf{Notation.} We use $\R$ and $\N$ to denote the set of real numbers and natural numbers respectively. For any $p, q \in \N$ with $p < q$, let $[p] = \{1, \dots, p \}$ and $[p, q] = \{p, \dots, q \}$. Given a set $S$, $|S|$ represents its cardinality.


\section{The Model}
\label{sec:model}

We expand on the class of resource allocation games \cite{marden2008distributed} to model a general distributed scenario. A set of agents $\ps =[n]$ must be allocated to a set of resources $\rs=\{r_1,\dots,r_m\}$. Each agent $i\in \ps$ is associated with a set of permissible actions $\ac_i \subseteq 2^\rs$ and we denote the set of admissible joint allocations by the tuple $a = (a_1, \dots, a_n) \in \ac = \ac_1 \times \dots \times \ac_n$. Each resource $r \in \rs$ is associated with a state $y_r$ and a welfare function $\wf_r:[n] \times \x_r \to\R$ that measures the system performance at that resource as a function of its aggregate utilization. In other words, $\wf_r(k; y_r)$ is the system performance at resource $r$ when there are $k\in[n]$ users selecting $r$ and the local state of resource $r$ is $y_r$.  Lastly, the system-level welfare is captured by the function $\wf : \ac \times \x \to \R$, where $\x = \prod_{r \in \rs} \x_r$ defines the set of possible system states. In general, for a given allocation $a \in \ac$ and system state $y \in \x$ the system-level welfare is of the form

\begin{equation}
    \wf(a; y) = \sum_{r \in \rs} \wf_r(|a|_r; y_r),
\end{equation}
where $|a|_r = |\{i\in \ps \text{ s.t. }r\in a_i\}|$ represents the number of agents selecting $r$ in the allocation $a$.  Given a system state $y \in \x$, the goal of the multi-agent system is to coordinate to the allocation that optimizes the system-level welfare

\begin{equation} 
    \label{eq:optimal_allocation}
    a^\opt \in \underset{a \in \ac}{\arg \max} \ \wf(a; y).
\end{equation}

However, deriving the optimal allocation requires an infeasible amount of computational resources and coordination. Therefore, we focus on arriving at approximate solutions through a designed utility function $\ui: \ac \times \x \to \R$ and model the emergent collective behavior by a pure Nash equilibrium, which we will henceforth refer to simply as an equilibrium. Given the local knowledge $y_i \in \x$ of each agent $i \in \ps$, an equilibrium is defined as an allocation $a^\ns \in \ac$ such that for any agent $i \in \ps$

\begin{equation}
    \ui(a_{i}^\ns, a_{-i}^\ns; y_i) \geq \ui(a_i, a_{-i}^\ns; y_i), \ \ \forall a_i \in \ac_i.
\end{equation}
It is important to highlight that an equilibrium may or may not exist for such situations, particularly in cases where the agents do not evaluate their utility functions for the same state, i.e., $y_i \neq y_j$. Throughout this paper, we will assume that equilibria exist within the games under consideration. Nonetheless, our results extend to other solution concepts that are guaranteed to exist such as coarse correlated equilibrium \cite{chandan2019smoothness}.

Throughout this paper, we will focus on the scenario where there is a true system state $\xtr \in \x$ and each agent $i \in \ps$ has its own knowledge of this state $y_i$ which may or may not reflect the true state, i.e., $y_i$ need not equal $\xtr$. While the agents' knowledge $y = (y_1, \dots, y_n) \in \x_n$ will invariably influence their local behavior and resulting equilibria, we will measure the performance of the resulting equilibrium $a^\ns$ according to the true state, i.e., $\wf(a^{\ns}; \xtr)$. Accordingly, our goal is to assess how discrepancies in the agents' knowledge impacts the quality of the resulting equilibria. Furthermore, we investigate the optimal design of the utility functions in scenarios where such discrepancies may exist.

To ground these questions moving forward, we consider an extension of the utility functions considered in the framework of Distributed Welfare Games \cite{marden2008distributed}, where each resource is associated with a utility generating function of the form $\ur : [n] \times \x_r \to \R$.  Here, the utility generating function defines the benefit associated with each agent selecting resource $r$, and can depend on both (i) the number of agents selecting resource $r$ and (ii) the state of resource $r$. Given these utility generating functions $\{\ur\}_{r \in \rs}$, the utility of an agent $i\in \ps$ in an allocation $a \in \ac$ is separable and of the form

\begin{equation} \label{eq:utility_function}
    \ui(a; y_i) = \sum_{r\in a_i} \ur(|a|_r; y_{i,r}).
\end{equation}
Note that each agent $i\in\ps$ uses its own state values $y_i = \{y_{i,r}\}_{r \in \rs}$ to compute its utility at each resource. 

We measure the efficiency of the resulting equilibria through the well-studied price of anarchy metric \cite{koutsoupias1999worst}. We begin by formally expressing a game by the tuple 
\begin{equation*}
   \gm = \big(\ps, \rs, \ac, \big\{\wf_r, \x_r, \ur \big\}_{r \in \rs}, \xtr, \big\{y_i \big\}_{i \in \ps} \big).
\end{equation*}
Note that this tuple includes all relevant information to define the game. We define the price of anarchy of the game $\gm$ by

\begin{equation*}
    \poa(\gm) := \frac{\min_{a \in {\rm NE}(\gm)} \wf(a; \xtr)}{\max_{a \in \ac} \wf(a; \xtr)} \leq 1,
\end{equation*}
where ${\rm NE}(\gm) \subseteq \ac$ denotes the set of equilibrium of the game $\gm$. We will often be concerned with characterizing the price of anarchy for broader classes of games where resources share common characteristics. To that end, let $\z_r = \big\{ \wf_r, \x_r, \ur \big\}$ define the characteristics of a given resource $r$. Further, let $\Z$ denote a family of possible resource characteristics. We define the family of games $\Gm_{\Z}$ as all games of the above form where $\big\{\wf_r, \x_r, \ur \big\} \in \Z$ for each resource $r \in \rs$. The price of anarchy of the family of games $\Gm_{\Z}$ is defined as 

\begin{equation*}
    \poa(\Gm_{\Z}) := \inf_{\gm \in \Gm_{\Z}} \poa(\gm) \leq 1.
\end{equation*}

For brevity we do not explicitly highlight the number of agents in a class of games as that is always assumed to be less than $n$. In order to express the informational inconsistencies between the agents' knowledge and the true state, we define the metric $\rho_d: \Gm_{\Z} \to \R_{\geq 0}$ as
\begin{equation}
    \rho_d(\gm) = \max_{i\in\ps} d(y_i^\gm; \xtr^\gm),
\end{equation}
where $d: \x \times \x \to \R_{\geq 0}$ is some distance measure such that $d(y,y')=0$ if and only if $y=y'$. Observe that, under this notation, a perfect information scenario where all agents know the true state corresponds with $d(y_i^\gm; \xtr^\gm) = 0$ for all $i \in \ps$ and $\rho_d(\gm) = 0$. Conversely, when the agents have limited knowledge on $\xtr^{\gm}$, $\rho_d(\gm)$ measures the extent of the uncertainty where a higher $\rho_d(\gm)$ indicates that the agent's evaluation of the state $y_i^\gm$ is ``further'' from the true state $\xtr^\gm$. Consolidating these limitations, the set of games in which $\rho_d(\gm) \leq \delta$ is denoted by $\Gm^\delta_{\Z}$.

In particular, we use the following distance measure for the rest of the paper:
\begin{equation}
    \label{eq:dwfs}
    d(y; \xtr) =  \max_{r \in \rs, k \in [n]} \frac{|\wf_r(k; y_r) - \wf_r(k; y_{\rm{true}, r})|}{\wf_r(k; y_{\rm{true}, r})}.
\end{equation}
Note that, for a given instance, this measure allows us to capture the level uncertainty that agents have on the system welfare independently of the individual resources. For context, we introduce the following application domains.

\begin{ex}[Forest Fire Detection]
\label{ex:forest}
Consider the scenario detailed in \cite{hefeeda2009forest} where a set of unmanned aerial vehicles (UAVs) coordinate to cover a forest region to maximize the detection of a forest fire - modeled as a covering game \cite{gairing2009covering}. The UAVs are the agents in the game and the resource set $\rs$ correspond to a finite partition of the forest region that the UAVs are tasked to cover. Each UAV carries a sensor with a limited sensing range and must select a position to survey (with a corresponding sensing range) - this choice is modeled by an action set $\ac_i$. The state $y_r$ of each resource $r$ corresponds to the risk that a forest fire might emerge in that resource. 
We wish to allocate the UAVs to maximize
\begin{equation}
    \wf(a; \{y_r\}_{r\in\mc{R}}) = \sum_{r\in\cup a_i} y_r,
\end{equation}
which must balance focusing on the high risk areas and covering as much of the forest region as possible.
\end{ex}

\begin{ex}[Weapon-Target Assignment]
Consider the weapon-target assignment problem described in \cite{murphey2000target} where a set of weapons $\ps=\{1,\dots,n\}$ are assigned to a set of targets $\mc{T}$ with the objective of maximizing the expected value of targets engaged. When $k\in\{1,\dots,n\}$ weapons engage a target $t\in\mc{T}$, its expected value is $v_t\cdot[1-(1-p_t)^k]$ where $v_t\geq 0$ is $t$'s associated value and $p_t\in[0,1]$ is $t$'s probability of successful engagement. Based on its range and specifications, each weapon $i\in\ps$ can only engage particular subsets of the targets corresponding with the actions $a_i \in \mc{A}_i\subseteq 2^\mc{T}$. Accordingly, under an allocation of weapons $a=(a_1,\dots,a_n)$, the operator's welfare is measured as
\begin{equation}
    \wf(a; \{(v_t,p_t)\}_{t\in\mc{T}}) = \sum_{t\in\mc{T}} v_t \cdot \big[ 1 - (1 - p_t)^{|a|_t} \big].
\end{equation}
Observe that this scenario can be modeled as a resource allocation game where each weapon is an agent, each target is a resource and each target $t$ has state $y_t = (v_t, p_t)$. 
\end{ex}


Though a resource characteristic is a triplet $\z_r = \big\{\wf_r, \x_r, \ur\big\}$, in many cases only $\wf_r$ and $\x_r$ are inherited from the problem setting, while the utility generating rule $\ur$ is designed. Accordingly, we will often think of the utility generating function at each resource as being derived from $\{\wf_r, \x_r\}$, i.e., $\ur = \Pi(\wf_r, \x_r)$ where $\Pi$ is the \emph{utility mechanism}. Let the set $\Z(\Pi) = \{\wf_r, \x_r, \Pi(\wf_r, \x_r)\}_{r \in \rs}$. 

The main focus of this paper is on determining the utility mechanism that maximizes the price of anarchy, i.e.,
\begin{equation}
    \label{eq:ovprob}
    \Pi^{\rm opt} = \underset{\Pi}{\arg \max} \ \poa(\Gm_{\Z(\Pi)}^\delta).
\end{equation}
Accordingly, one may wish to understand how the achievable performance guarantees are affected by the amount of uncertainty $\delta\geq 0$. In scenarios where the system designer does not know the true value of $\delta$, one may additionally seek to characterize the degradation in performance guarantees for estimates on $\delta$ of varying levels of accuracy. 
In the forthcoming sections, we provide preliminary results along these lines of questioning.



\section{Characterization of PoA}
\label{sec:poa}

Having defined our general model for limited information scenarios, in this section, we concentrate on a specific class of resource characteristics. Doing so allows us to formulate the optimization problem in \eqref{eq:ovprob} as a linear program by leveraging recent results in \cite{paccagnan2018distributed}. Let $\Z_{w, u}$ correspond to a set of resource characteristics,\footnote{The results in this section can be extended to settings where $\wf_r$ and $\ur$ are linear combinations over a set of basis functions pair $\{w^j,u^j\}$, $j=1,\dots,L$, following the results in \cite{chandan2019smoothness}. We state our results for only one basis function pair $\{w,u\}$ (i.e., $L=1$) for ease of presentation.}
\begin{align}
    \x_r &= \R_{\geq 0} \\
    \wf_r(|a|_r; y_r) &= y_r \cdot w(|a|_r) \\
    \ur(|a|_r; y_r') &= y_r' \cdot u(|a|_r)
\end{align}
respectively, where $y_r, y_r' \in \x_r$ and $w: [n] \to \R_{>0}$ and $u: \{1, \dots, n\} \to \R$ are fixed across all resources $r \in \rs$ with $w(0) = u(0) = 0$ and $w(1) = u(1) = 1$. With abuse of notation, we use the denotation $\Gm_{w, u}$ to refer to the family of games $\Gm_{\Z_{w, u}}$. In this model, $y_r$ corresponds to a measure of value of the resource $r$. Additionally, when a set of agents cover a certain resource $r$, $w$ and $u$ correspond to the resource agnostic measure of the added system welfare and agent utility, respectively. In this setting, the distance measure in \eqref{eq:dwfs} can be rewritten as
\begin{align*}
    d(y; \xtr) &=   \max_{r \in \rs, k \in [n]} \frac{| y_r \cdot w(k) - y_{\rm{true}, r} \cdot w(k)|}{ y_{\rm{true}, r} \cdot w(k)} \\
    &= \max_{r \in \rs} \frac{|y_{r} - y_{\rm{true}, r}|}{y_{\text{true}, r}},
\end{align*}
directly encoding the relative uncertainty of $y$ from $\xtr$. In other words, given a maximum uncertainty $0\leq \delta \leq 1$, the state $y_r$ must be in the continuous interval $[(1-\delta)  y_{\rm{true}, r}, (1+\delta) y_{\rm{true}, r}]$ for all $r$.\footnote{Note that we only consider the domain $\delta \in [0, 1)$, since if $\delta \geq 1$, the player valuation $y_r^i$ can be arbitrarily close to $0$ for any $y_{\rm{true}, r}$} In the forthcoming results, we will also use the parameter 
\begin{equation*}
    \cfdel = \frac{1 + \delta}{1 - \delta}
\end{equation*}
to state certain equations more concisely.
%
%
The following theorem presents a tractable linear program for computing the price of anarchy:
\begin{thm}
\label{thm:LP}
Consider a class of resource allocation games with $\Z_{w, u}$ for a given $w$ and $u$. Additionally, let $\delta \in [0, 1)$ denote the limitations of the agents' knowledge. It holds that $\poa(\Gm_{w, u}^{\delta})=1/V^*$ where $V^*$ is the optimal value of the following linear program:
\begin{equation}\label{eq:LP1}
\begin{aligned}
    V^* = & \max_{\theta} \sum_{a, x, b}{w(b+x) \theta(a, x, b)} \ \text{s.t.} \\
    & \sum_{a, x, b}{\Big[\cfdel a u(a+x) - b u(a+x+1)\Big] \theta(a, x, b) \geq 0} \\
    & \sum_{a, x, b}{w(a+x) \theta(a, x, b)} = 1 \\
    & \theta(a, x, b) \geq 0 \ \ \forall a,x,b
\end{aligned}
\end{equation}
where $a, x, b \in \mathbb{N}$ such that $1 \leq a + x + b \leq n$.
\end{thm}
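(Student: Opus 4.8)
The plan is to follow the primal--dual (smoothness) methodology of \cite{paccagnan2018distributed}, adapting it to account for the agents' perceived values. Fix any game $\gm \in \Gm_{w,u}^{\delta}$, an equilibrium $a^\ns$, and an optimal allocation $a^\opt$. For each resource $r$, partition the agents selecting it and let $x_r$ count those with $r \in a_i^\ns \cap a_i^\opt$, $a_r$ count those with $r \in a_i^\ns \setminus a_i^\opt$, and $b_r$ count those with $r \in a_i^\opt \setminus a_i^\ns$, so that $|a^\ns|_r = a_r + x_r$ and $|a^\opt|_r = b_r + x_r$. Then $\wf(a^\ns; \xtr) = \sum_r y_{\mathrm{true},r}\, w(a_r+x_r)$ and $\wf(a^\opt; \xtr) = \sum_r y_{\mathrm{true},r}\, w(b_r+x_r)$. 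The goal of the first direction is to show that the normalized type weights form a feasible point of \eqref{eq:LP1} whose objective equals $\wf(a^\opt;\xtr)/\wf(a^\ns;\xtr)$, giving $\poa(\Gm_{w,u}^\delta) \geq 1/V^*$; the second direction constructs a game attaining this ratio.

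For the first direction, I would sum the equilibrium inequalities $\ui(a_i^\ns, a_{-i}^\ns; y_i) \geq \ui(a_i^\opt, a_{-i}^\ns; y_i)$ over all $i \in \ps$ and regroup the separable terms \eqref{eq:utility_function} by resource. The agents in the shared group contribute identically to each side (they evaluate $u(a_r+x_r)$ with the same perceived $y_{i,r}$) and cancel, leaving
\begin{equation*}
\sum_r u(a_r+x_r)\!\!\sum_{i \,:\, r \in a_i^\ns \setminus a_i^\opt}\!\! y_{i,r} \;\geq\; \sum_r u(a_r+x_r+1)\!\!\sum_{i \,:\, r \in a_i^\opt \setminus a_i^\ns}\!\! y_{i,r},
\end{equation*}
an inequality in the \emph{perceived} values. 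The crucial step is to pass to true values using $y_{i,r} \in [(1-\delta)y_{\mathrm{true},r},(1+\delta)y_{\mathrm{true},r}]$: since the equilibrium-only perceived values are at most $(1+\delta)$ times and the optimal-only perceived values at least $(1-\delta)$ times their true counterparts, chaining these bounds through the perceived inequality yields $(1+\delta)\sum_r a_r u(a_r+x_r) y_{\mathrm{true},r} \geq (1-\delta)\sum_r b_r u(a_r+x_r+1) y_{\mathrm{true},r}$, i.e.\ exactly the factor $\cfdel = (1+\delta)/(1-\delta)$ appearing in \eqref{eq:LP1}. Setting $\theta(a,x,b) = \wf(a^\ns;\xtr)^{-1}\sum_{r \,:\, (a_r,x_r,b_r)=(a,x,b)} y_{\mathrm{true},r}$ then satisfies the normalization $\sum w(a+x)\theta = 1$ and the Nash constraint, so that $\wf(a^\opt;\xtr)/\wf(a^\ns;\xtr) = \sum w(b+x)\theta \leq V^*$.

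For the matching bound $\poa(\Gm_{w,u}^\delta) \leq 1/V^*$, I would take an optimal solution $\theta^*$ of \eqref{eq:LP1} and build an explicit game realizing it, following the gadget construction of \cite{paccagnan2018distributed}: instantiate resources whose true values are proportional to $\theta^*(a,x,b)$ and arrange the agents' action sets so that the designated allocation places exactly $a$ equilibrium-only, $x$ shared, and $b$ optimal-only agents on each resource of type $(a,x,b)$. The perceptions are then set to the adversarial extremes identified above---equilibrium-only agents overvalue by $(1+\delta)$ and optimal-only agents undervalue by $(1-\delta)$---which is precisely what renders the designated allocation a genuine equilibrium while the true welfare ratio equals $V^*$. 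Invoking LP duality lets one read the smoothness certificate off the dual of \eqref{eq:LP1} and verify this construction cleanly.

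The step I expect to be most delicate is the passage from perceived to true values that produces $\cfdel$. The clean $(1+\delta)/(1-\delta)$ factor relies on the relevant utilities $u(a_r+x_r)$ and $u(a_r+x_r+1)$ carrying the sign that makes overvaluing (resp.\ undervaluing) the worst case; since $u$ may be negative, I would need to argue that types violating this sign condition are never active in the optimal worst-case instance (equivalently, that the extremal perception choice never strengthens the constraint), so that \eqref{eq:LP1} is simultaneously necessary for equilibrium feasibility and tight under the explicit construction. The remaining work---checking that the gadget is a bona fide game whose heterogeneous perceptions admit the claimed equilibrium---is technical but follows the established template once the $\cfdel$ bookkeeping is settled.
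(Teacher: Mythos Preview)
Your proposal is correct and follows essentially the same route as the paper: reduce to two-action (equilibrium/optimal) data, partition each resource by the $(a,x,b)$ counts, cancel the shared terms, use the $\delta$-bounds on perceptions to replace each agent's $y_{i,r}$ by the extremal $(1\pm\delta)y_{\mathrm{true},r}$ (the paper phrases this as ``the tightest constraint arises when agents overvalue equilibrium resources and undervalue optimal ones'' rather than as a chain of inequalities, but the logic is identical and yields exactly the $\cfdel$ factor), sum the Nash conditions to obtain the relaxed LP constraint, and defer the matching lower bound to the gadget construction of \cite{paccagnan2018distributed} with perceptions set to those same extremes. The sign concern you flag for negative $u$ is likewise left implicit in the paper's ``tightest constraint'' step.
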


Following the reasoning detailed in \cite{paccagnan2018distributed}, we can also define a linear program that computes the optimal utility design for the class of resource allocation games. Interestingly, we can directly import the techniques in \cite{paccagnan2018distributed} to achieve quite strong answers to questions about utility designs in limited information settings. For a given uncertainty $\delta$ and welfare characteristic $w$, we refer to the optimal utility mechanism as $\opu$. We omit the details of the proof. 

\begin{cor}
\label{cor:optfw}
Consider the class of resource allocation games $\Gm_{w, u}^{\delta}$ with $n$ number of agents for a given $w \in \R^n_{> 0}$. Additionally, let $\delta \in [0, 1)$ denote the uncertainty. The utility mechanism $\opu$ that maximizes the price of anarchy is given as
\begin{align*}
    &(\opu, \mu^*) \in \underset{u \in \R^n,  \ \mu \in \R}{\arg \min} \mu \ \ \ \mathrm{s.t.}  \\
    w(b + x) &- \mu w(a + x) + \cfdel a u(a + x) - b u(a + x + 1) \leq 0 \\
    &\mathrm{for \ all} \ a, x, b \in \mathbb{N} \ \mathrm{with} \  1 \leq a + x + b \leq n \\
    &u(1) = 1
\end{align*}
with $\poa(\Gm_{w, \opu}^{\delta}) = \frac{1}{\mu^*}$.
\end{cor}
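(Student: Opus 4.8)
The plan is to reduce the utility-design problem to a single linear program by combining Theorem~\ref{thm:LP} with linear-programming duality, following the technique of \cite{paccagnan2018distributed}. By Theorem~\ref{thm:LP}, for any fixed utility mechanism $u$ we have $\poa(\Gm_{w,u}^{\delta}) = 1/V^*(u)$, where $V^*(u)$ denotes the optimal value of \eqref{eq:LP1} (note that $u$ enters \eqref{eq:LP1} only through the first constraint). Consequently, maximizing the price of anarchy over all mechanisms is equivalent to computing $\min_{u} V^*(u)$, and $\opu$ is any minimizer. The goal is to show that this nested minimization of a linear program collapses into the single minimization claimed in Corollary~\ref{cor:optfw}.

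First I would dualize the inner program. For fixed $u$, \eqref{eq:LP1} is a linear program in $\theta$; assigning a multiplier $\lambda \ge 0$ to the inequality constraint and a free multiplier $\mu$ to the normalization $\sum_{a,x,b} w(a+x)\theta(a,x,b)=1$, its dual is
\begin{equation*}
\min_{\lambda \ge 0,\ \mu}\ \mu \ \ \text{s.t.} \ \
w(b+x) - \mu\, w(a+x) + \lambda\big[\cfdel\, a\, u(a+x) - b\, u(a+x+1)\big] \le 0
\end{equation*}
for all $a,x,b\in\N$ with $1 \le a+x+b \le n$. Since the primal is feasible and bounded, strong duality gives $V^*(u)=\min_{\lambda\ge0,\mu}\{\mu:\dots\}$. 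Substituting this into $\min_u V^*(u)$ merges the outer minimization over $u$ with the inner minimization over $(\lambda,\mu)$ into one joint minimization. The only nonlinearity is the bilinear product $\lambda\,u(\cdot)$ appearing in every row; the crux is to linearize it through the change of variables $v(k):=\lambda\,u(k)$, after which the constraints read $w(b+x)-\mu\,w(a+x)+\cfdel\,a\,v(a+x)-b\,v(a+x+1)\le 0$, i.e.\ exactly the constraints of Corollary~\ref{cor:optfw} with $v$ playing the role of $u$.

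The step I expect to be the main obstacle is justifying the normalization $u(1)=1$ and hence the validity of the substitution, which is where the positive-scale-invariance of the price of anarchy is essential. Scaling $u\mapsto c\,u$ with $c>0$ leaves every best response, equilibrium, and therefore the price of anarchy unchanged, so the design problem is really over the ray through $u$; this is what permits $\lambda$ and the scale of $u$ to be absorbed into the single free vector $v$. I would first argue that the optimal multiplier satisfies $\lambda>0$: the row indexed by $(a,x,b)=(0,0,1)$ reads $w(1)-\lambda\,u(1)=1-\lambda\le 0$, forcing $\lambda\ge 1$, so the map $v=\lambda u$ is invertible at the optimum. Next, feasibility of the $v$-program already forces $v(1)\ge \max_{b} w(b)/b$ through the $(0,0,b)$ rows; for the welfare functions of interest this maximum equals $w(1)=1$ (it holds whenever $w(b)\le b$, e.g.\ for concave welfare as in the set covering case of Section~\ref{sec:setcov}), so $v(1)=1$ is feasible. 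Finally, the $(1,0,0)$ row gives $\mu\ge \cfdel\,v(1)$, so the objective is increasing in $v(1)$ and an optimizer can be taken with $v(1)=1$, which we relabel as $\opu$. This yields $\min_u V^*(u)=\mu^*$ and hence $\poa(\Gm_{w,\opu}^{\delta})=1/\mu^*$. The difficulty is thus not the duality computation itself but the bookkeeping showing that the bilinear reparametrization together with scale-invariance reproduces the normalized program \emph{without changing the optimal value}.
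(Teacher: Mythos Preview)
Your proposal is correct and takes precisely the approach the paper has in mind: the paper omits the proof entirely, saying only that it follows from Theorem~\ref{thm:LP} via the technique of \cite{paccagnan2018distributed}, which is exactly the dualize-then-absorb-$\lambda$-into-$u$ argument you sketch (the same dual is written out explicitly in the paper's proof of Proposition~\ref{prop:setpoa}). Your flagging of the normalization $u(1)=1$ as the delicate step is apt; note that your monotonicity claim ``$\mu$ is increasing in $v(1)$'' is heuristic as stated (the $(1,0,0)$ row gives only one of many lower bounds on $\mu$), but since the paper defers this entire point to the cited reference, your plan is on the same footing.
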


In a realistic scenario, the system operator may not know the extent of the informational inconsistencies among the agents (i.e., the precise value of $\delta$). In this case, what can be shown about the performance guarantees that a utility $\opu$ -- designed according to an assumed uncertainty $\delta$ -- achieves under a realized uncertainty $\dtr \neq \delta$? In other words, if there is a mismatch between the operator's assumed $\delta$ and the realized $\delta$, is there any loss of performance? In these situations, the following figure shows the quite surprising fact that underestimating the $\delta$ actually gives better performance guarantees. In Figure \ref{fig:dropoffsub}, we randomly generated $30$ different welfare characteristics $w$, in which $w(j)$ is concave and non-decreasing in $j$. We assume that $\dtr = .3$ and the game has $n=10$ players. For a given $w$, we computed the optimal utility design $\opu = \arg \max_u \poa(\Gm_{w, u}^{\delta})$ for each $\delta$ from $\delta = 0$ to $1$. Then we plot the price of anarchy $\poa(\Gm_{w, \opu}^{\dtr})$ for each design. We see the performance degrades slower to the left of $\dtr$ than to the right. In the next section, we formally capture this trend in the well-studied class of set covering games.

\begin{figure}
    \centering
    \includegraphics[width=0.4\textwidth]{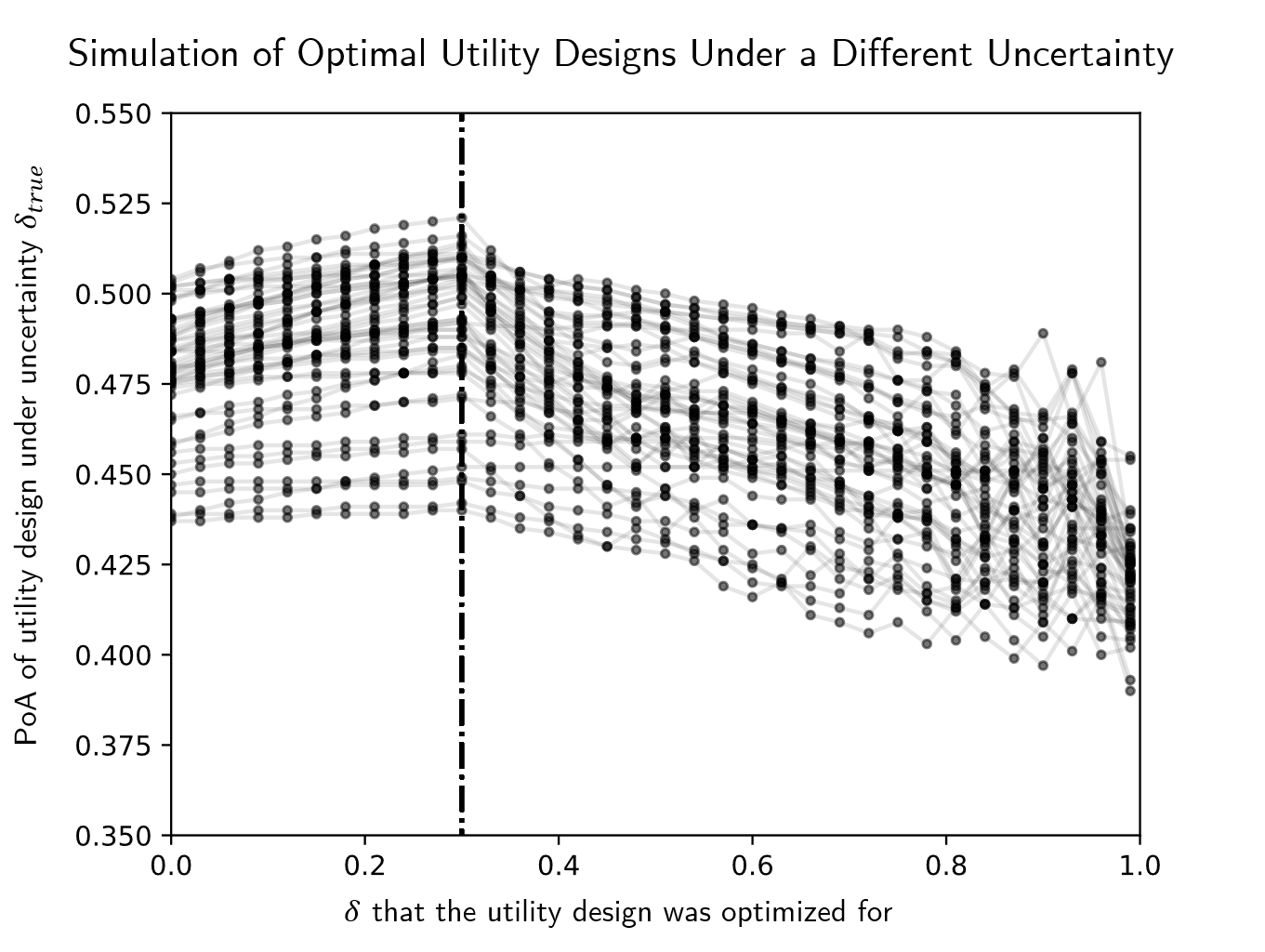}
    \caption{The price of anarchy is plotted for the optimal designs for various uncertainties $\delta$ under 30 randomly chosen welfare characteristics $w$ and given $\dtr = 0.3$ (indicated by the dotted line). Spanning all possible $\delta$, it can be seen indeed that $\pi_{0.3}$ performs optimally. However, more surprisingly, the degradation of performance as we move away from $\dtr$ is slower on the left than on the right of $\delta=0.3$. Nonintuitively, this suggests that by underestimating the value of $\dtr$ we can achieve higher price of anarchy than overestimating it.}
    \label{fig:dropoffsub}
    \vspace{-12pt}
\end{figure}

\section{Set Covering Results}
\label{sec:setcov}

In this section, we restrict our analysis to \emph{set covering games}, where the system welfare is the value of the resources covered, i.e. $w_{sc}(k) = 1$ for $k \geq 1$ and. Set covering games \cite{gairing2009covering} are well studied and known to model a wide variety of practical applications, one of which is detailed in Example \ref{ex:forest}. We obtain an explicit characterization of the optimal utility design $\opu$ for the class of set covering games and formally demonstrate the phenomenon observed in Figure \ref{fig:dropoffsub} for this class of games.

To characterize the optimal utility design, we first outline the following Proposition to characterize the price of anarchy in set covering games with uncertainty.

\begin{prop}
\label{prop:setpoa}
For given utility $u$, uncertainty $0 \leq \delta \leq 1$, and number of agents $n \geq 1$, the price of anarchy for the class of set covering games $\Gm_{w_{sc}, u}^{\delta}$ is
\begin{align*}
    \poa^{-1}(\Gm_{w_{sc}, u}^{\delta}) =
    \max_{j \in [1, n-1]} \{\max \{\cfdel (j+1) u(j+1), \\
    \cfdel j u(j+1) + 1,
     \cfdel j u(j) - u(j+1) + 1 \} \}
\end{align*}
\end{prop}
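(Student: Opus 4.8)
The plan is to invoke Theorem~\ref{thm:LP}, which already gives $\poa^{-1}(\Gm_{w_{sc},u}^{\delta})=V^*$, and to evaluate its linear program explicitly after specializing the welfare to $w_{sc}(k)=1$ for $k\ge 1$ and $w_{sc}(0)=0$. With this choice the objective $\sum_{a,x,b} w(b+x)\theta(a,x,b)$ collapses to the total weight on configurations covered by the optimal allocation ($b+x\ge 1$), while the normalization $\sum w(a+x)\theta=1$ becomes the total weight on configurations covered at equilibrium ($a+x\ge 1$). Writing $M$ for the right-hand side of the proposition, I would prove the two matching inequalities $V^*\ge M$ and $V^*\le M$ separately, using the primal--dual (smoothness) reasoning of~\cite{paccagnan2018distributed}. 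The three terms in $M$ are not arbitrary: they are exactly the values attained by the three configuration families $(j+1,0,0)$, $(j,1,0)$, and $(j,0,1)$, and this correspondence drives both directions.

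For the lower bound I would exhibit, for each $j$ and each term, a feasible $\theta$ supported on just two configurations: one ``carrier'' that absorbs the normalization weight and supplies a positive slack in the equilibrium inequality, paired with the ``payoff'' configuration $(0,0,1)$ that contributes to the objective but consumes that slack at unit cost. For instance, putting weight $1$ on $(j+1,0,0)$ (covered at equilibrium, not at the optimum) and weight $\cfdel(j+1)u(j+1)$ on $(0,0,1)$ satisfies both the equilibrium inequality and the normalization, and yields objective value $\cfdel(j+1)u(j+1)$; the pairs $\{(j,1,0),(0,0,1)\}$ and $\{(j,0,1),(0,0,1)\}$ reproduce the second and third terms identically. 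Taking the best construction over $j$ gives $V^*\ge M$ (when a capped weight would be negative the corresponding term is below $1\le V^*$, so the bound still holds trivially).

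For the upper bound I would verify the smoothness inequality
\[
 w_{sc}(b+x)-M\,w_{sc}(a+x)+\cfdel\,a\,u(a+x)-b\,u(a+x+1)\le 0
\]
for every admissible $(a,x,b)$; summing it against an arbitrary feasible $\theta\ge 0$ and then substituting the normalization equality and discarding the nonnegative equilibrium sum yields objective $\le M$, hence $V^*\le M$. The cases $a+x=0$ (reducing to $1-b\le 0$) and $b+x=0$ (reducing to $\cfdel\,a\,u(a)\le M$, matched by the first term) are immediate. The crux is the case $a+x\ge 1$, $b+x\ge 1$, where I must show that $1+\cfdel\,a\,u(a+x)-b\,u(a+x+1)$, maximized over all configurations with a fixed equilibrium count $s=a+x$, is attained at the vertices $(s-1,1,0)$ or $(s,0,1)$, which give precisely the second and third terms. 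This is exactly where nonnegativity of $u$ is essential: the coefficient of $a$ is $\cfdel\,u(a+x)\ge 0$ and that of $b$ is $-u(a+x+1)\le 0$, so the maximum is pushed to the largest admissible $a$ and the smallest $b$ compatible with $b+x\ge 1$, selecting these two vertices. I expect this monotonicity-in-$(a,b)$ reduction, together with the boundary bookkeeping at $s=1$ and $s=n$ that pins the index range to $j\in[1,n-1]$, to be the main obstacle. I would also flag that the closed form genuinely requires $u\ge 0$: without it, configurations $(j,0,b)$ with $b$ large and $u(j+1)<0$ make $V^*$ grow with $n$, so the stated three-term maximum presumes nonnegative utilities.
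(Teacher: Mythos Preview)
Your approach is correct and takes a genuinely different route from the paper. The paper passes immediately to the Lagrange dual of the LP in Theorem~\ref{thm:LP}, obtains the family of constraints $\mu\,w_{sc}(a+x)\ge w_{sc}(b+x)+\lambda[\cfdel\,a\,u(a+x)-b\,u(a+x+1)]$, and then performs a case analysis over the sign patterns of $(a,x,b)$ to discard redundant constraints; along the way it argues that $\lambda^*=1$ and that the $a=1$ constraint in the first family is dominated, arriving at the three surviving families after an index shift. You instead stay on the primal side: explicit two-point $\theta$'s supported on $(j{+}1,0,0)$, $(j,1,0)$, $(j,0,1)$ paired with $(0,0,1)$ pin down the lower bound, and the smoothness inequality you verify is exactly dual feasibility at $(\lambda,\mu)=(1,M)$, so weak duality together with your constructions yields $V^*=M$ without ever solving for $\lambda^*$. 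The two arguments are dual to one another in the literal LP sense; yours has the advantage of making the tight instances visible and of being self-contained about why $\lambda=1$ suffices (the paper asserts this with a one-line justification). Both proofs rely on $u\ge 0$ to push the maximizer of $1+\cfdel\,a\,u(a{+}x)-b\,u(a{+}x{+}1)$ to the vertices $(s,0,1)$ and $(s{-}1,1,0)$; you flag this explicitly, whereas in the paper it is implicit in the claims that ``$b=1$ is the most binding constraint'' and ``setting $b=0$ \ldots removes the negative term.'' Your boundary bookkeeping must also dispose of the lone constraint $M\ge\cfdel\,u(1)=\cfdel$ coming from $(1,0,0)$, which the paper handles by the same redundancy observation (dominated by term~2 or term~3 at $j=1$); this is routine but should be stated.
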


\begin{thm}
\label{thm:foptsetcov}
For a given $\delta$ the optimal utility design $\opu$ for the class of set covering games is
\begin{equation}
    \label{eq:setcovoptu}
    \opu(j) =  \sum_{k=j}^{\infty} \frac{(j-1)! }{\cfdel^{k - j + 1} (e^\frac{1}{B} - 1) k!}
\end{equation}
and has corresponding price of anarchy
\begin{equation*}
    \poa(\Gm_{w_{sc}, \opu}^{\delta}) = 1- e^{-\frac{1}{\cfdel}}.
\end{equation*} 
\end{thm}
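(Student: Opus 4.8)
The plan is to treat Theorem \ref{thm:foptsetcov} as an optimization problem over utility functions $u$ whose value is governed by the closed-form expression in Proposition \ref{prop:setpoa}. Since $\poa^{-1}(\Gm_{w_{sc}, u}^{\delta})$ equals a maximum over $j \in [1, n-1]$ of three affine-in-$u$ terms, minimizing the price of anarchy over $u$ is equivalent to minimizing this pointwise maximum — a minimax problem whose optimal solution typically equalizes the active constraints. First I would guess, guided by the linear-programming structure, that at the optimum all the binding terms across the range of $j$ are driven to a common value, and then verify that the candidate $\opu$ in \eqref{eq:setcovoptu} achieves exactly this equalization.

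Concretely, I would substitute the proposed $\opu(j)$ into the three expressions inside the maximum and show that each is bounded above by the claimed constant $1/(1 - e^{-1/\cfdel})$, with equality holding for the binding terms. The candidate is a geometric-type series in powers of $\cfdel^{-1}$ weighted by factorials, which strongly suggests that the key algebraic identity to establish is a recurrence: evaluating $\cfdel\, j\, \opu(j) - \opu(j+1)$ and $\cfdel\,(j+1)\,\opu(j+1)$ should collapse via telescoping of the series. The normalization $e^{1/B} - 1$ in the denominator (where I read $B = \cfdel$) is precisely what one expects from summing $\sum_{k \geq 1} \frac{1}{\cfdel^k k!} = e^{1/\cfdel} - 1$, so the infinite sum is chosen so that the $j=1$ term reproduces $u(1) = 1$ and the resulting common value of the maximum is $\frac{1}{1 - e^{-1/\cfdel}}$.

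The two halves of the argument are an achievability claim and an optimality (lower bound) claim. For achievability I would verify directly, using the series manipulation above, that $\opu$ yields $\poa^{-1} = 1/(1 - e^{-1/\cfdel})$. For optimality I would argue that no utility can do better: because the three families of constraints in Proposition \ref{prop:setpoa} cannot be simultaneously slackened, any feasible $u$ must have at least one term at least as large as the equalized value. A clean way to formalize this is to exhibit a dual certificate — nonnegative multipliers on a finite subset of the constraints whose weighted combination lower-bounds $\poa^{-1}(\Gm_{w_{sc}, u}^{\delta})$ by the target constant independently of $u$ — mirroring the LP-duality reasoning imported from \cite{paccagnan2018distributed}.

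The main obstacle I anticipate is the lower-bound (optimality) direction, specifically constructing the right combination of constraints that certifies no $u$ beats $1 - e^{-1/\cfdel}$. The achievability side reduces to a telescoping series identity that is tedious but routine; the difficulty lies in proving the matching bound, since one must rule out that some cleverly chosen nonmonotone or rapidly-decaying $u$ exploits slack in the finite-horizon ($j \leq n-1$) constraints. I expect the cleanest route is to pass to the $n \to \infty$ limit where the constraint set becomes the full infinite family, recognize the equalized recurrence as forcing $\opu$ uniquely, and then confirm that the finite-$n$ truncation does not change the optimal value — essentially showing the binding constraints all occur at accessible values of $j$.
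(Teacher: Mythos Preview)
Your achievability half aligns with the paper: both derive a recurrence $\cfdel\, j\, \opu(j) - \opu(j+1) = C-1$ from making the binding constraints of Proposition~\ref{prop:setpoa} tight, solve it (the paper does this by relaxing away the first family and all but the last member of the second family, then solving the resulting linear recursion), normalize to $\opu(1)=1$, and pass to $n\to\infty$ using $\sum_{k\geq 0}\cfdel^{-k}/k! = e^{1/\cfdel}$.

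The optimality direction, however, is where you diverge from the paper and where your proposal has a genuine gap. You propose a dual certificate: nonnegative multipliers on the constraints of Proposition~\ref{prop:setpoa} whose combination lower-bounds $\poa^{-1}$ independently of $u$. The paper does \emph{not} do this. Instead it exhibits an explicit worst-case game $\gm^*$ (a Gairing-style construction with resources indexed by permutation labels, true values scaling as $\cfdel^k$, and agent beliefs pushed to the extremes $(1\pm\delta)y_{\mathrm{true},r}$) and verifies combinatorially that $a^{\ns}$ is a Nash equilibrium \emph{for every utility rule $u$}. The welfare ratio of this single game then upper-bounds $\poa(\Gm_{w_{sc},u}^{\delta})$ uniformly in $u$, matching the achievability value.

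Your argument that ``the equalized recurrence forces $\opu$ uniquely'' does not establish optimality: it shows $\opu$ is the unique equalizer of the selected subfamily of constraints, but not that equalization is optimal, nor that a different $u$ failing one of the dropped constraints cannot do better. A dual certificate would close this, but you have not supplied the multipliers, and finding them is essentially equivalent to discovering the game construction (the dual variables $\theta(a,x,b)$ in the LP of Theorem~\ref{thm:LP} \emph{are} the resource weights of a worst-case instance). So the obstacle you flag is real, and the paper resolves it by the primal construction rather than the dual-analytic route you sketch.
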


Now that we have characterized the optimal utility design for set covering games, we can arrive at a closed form expression for the guarantees when there is a mismatch in uncertainty between the system operator and the realized uncertainty.

\begin{prop}
\label{prop:setmisalign}
Let $\opu$ be the optimal utility design for $0\leq\delta\leq 1$ as in \eqref{eq:setcovoptu} and $0\leq\dtr\leq 1$ be the realized uncertainty. The price of anarchy is $\poa(\Gm_{w_{sc}, \opu}^{\dtr}) = V^{-1}$
\begin{equation}
\label{eq:opucurve}
V = 
\begin{cases}
       (\cfdeltr \cfdel^{-1} - 1) \opu(2) +  \\ 
        \cfdeltr \cfdel^{-1}(C - 1) + 1  & \text{if } \delta \leq \dtr,\\
        \cfdeltr \cfdeltr^{-1}(C - 1) + 1 & \text{if } \delta \geq \dtr,
        \end{cases}
\end{equation}
where C = $(e^{\frac{1}{\cfdel}} - 1)^{-1}$.
\end{prop}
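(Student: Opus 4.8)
The plan is to evaluate the price-of-anarchy characterization of Proposition~\ref{prop:setpoa} at the \emph{realized} uncertainty $\dtr$ for the \emph{designed} utility $\opu$, and then to collapse the resulting maximization using one structural identity satisfied by $\opu$. Replacing $\cfdel$ by $\cfdeltr$ and setting $u = \opu$ in Proposition~\ref{prop:setpoa} gives $\poa^{-1}(\Gm_{w_{sc},\opu}^{\dtr}) = \max_{j \in [1,n-1]} \max\{T_1(j), T_2(j), T_3(j)\}$ with $T_1(j) = \cfdeltr (j+1)\opu(j+1)$, $T_2(j) = \cfdeltr j\,\opu(j+1) + 1$, and $T_3(j) = \cfdeltr j\,\opu(j) - \opu(j+1) + 1$, so the task is to identify the dominating term and the maximizing index.

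First I would establish, directly from the closed form \eqref{eq:setcovoptu} by shifting the summation index, the recursion $\cfdel\, j\,\opu(j) = C + \opu(j+1)$ valid for all $j \ge 1$. The same manipulation recovers $\opu(1) = 1$ and $\opu(2) = \cfdel - C$, and shows that $\opu$ is strictly positive, strictly decreasing, and tends to $0$ as $j \to \infty$. Substituting this identity collapses the three families to $T_1(j) = (\cfdeltr/\cfdel)(C + \opu(j+2))$, $T_2(j) = (\cfdeltr/\cfdel)\tfrac{j}{j+1}(C + \opu(j+2)) + 1$, and, crucially, $T_3(j) = (\cfdeltr/\cfdel)\,C + (\cfdeltr/\cfdel - 1)\,\opu(j+1) + 1$. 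As a sanity check, at the matched design $\delta = \dtr$ the term $T_3$ reduces to the constant $C+1$, which is exactly the $\poa^{-1}$ predicted by Theorem~\ref{thm:foptsetcov}.

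Next I would reduce the three families to $T_3$. Using that $\opu$ is decreasing and $\opu(j+1) \le \opu(1) = 1$, the difference $T_1(j) - T_3(j) = (\cfdeltr/\cfdel)(\opu(j+2) - \opu(j+1)) + (\opu(j+1) - 1)$ is a sum of two nonpositive quantities, so $T_1(j) \le T_3(j)$ for every $j$. For $T_2$, the recursion implies $\tfrac{j}{j+1}(C + \opu(j+2)) < C$ while this quantity tends to $C$; hence $\sup_j T_2(j) = (\cfdeltr/\cfdel)\,C + 1$. It therefore remains only to compute $\sup_j T_3(j)$ and compare it with this value.

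The final step is a case split on the sign of $\cfdeltr/\cfdel - 1$, equivalently on whether $\delta \le \dtr$ or $\delta \ge \dtr$, since $\cfdel$ is increasing in $\delta$. When $\delta \le \dtr$ the coefficient $\cfdeltr/\cfdel - 1$ is nonnegative, so $T_3$ is largest where $\opu(j+1)$ is largest, namely at $j = 1$, producing $T_3(1) = (\cfdeltr/\cfdel)\,C + (\cfdeltr/\cfdel - 1)\,\opu(2) + 1$, which exceeds $\sup_j T_2$ and yields the first branch of \eqref{eq:opucurve}. When $\delta \ge \dtr$ the coefficient is nonpositive, so each $T_3(j)$ stays below $(\cfdeltr/\cfdel)\,C + 1$ and approaches it as $j \to \infty$; the supremum then matches $\sup_j T_2$ and yields the second branch. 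Inverting $V = \poa^{-1}$ gives the stated price of anarchy. I expect the delicate part to be this last step: correctly ordering the three families via the recursion and, in the overestimation regime $\delta \ge \dtr$, recognizing that the maximizer is attained only in the limit $j \to \infty$ — so that the $\opu(j+1)$ correction vanishes, consistent with the large-$n$ regime in which the infinite-sum design \eqref{eq:setcovoptu} is exactly optimal.
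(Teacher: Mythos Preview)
Your proposal is correct and follows essentially the same route as the paper: both evaluate Proposition~\ref{prop:setpoa} at the pair $(\opu,\dtr)$, invoke the defining recursion $\cfdel\, j\,\opu(j) = C + \opu(j+1)$ to rewrite the third family of constraints as $(\cfdeltr\cfdel^{-1})C + (\cfdeltr\cfdel^{-1} - 1)\,\opu(j+1) + 1$, argue that the first two families are dominated, and then case-split on the sign of $\cfdeltr\cfdel^{-1}-1$. The only cosmetic difference is that the paper carries the finite-$n$ design through and sends $n\to\infty$ at the end, whereas you work directly with the infinite-sum form \eqref{eq:setcovoptu}; your elimination of $T_1,T_2$ via the substituted expressions is a touch more streamlined than the paper's index-by-index comparison, but the structure is identical.
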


\begin{figure}
    \centering
    \includegraphics[width=250pt]{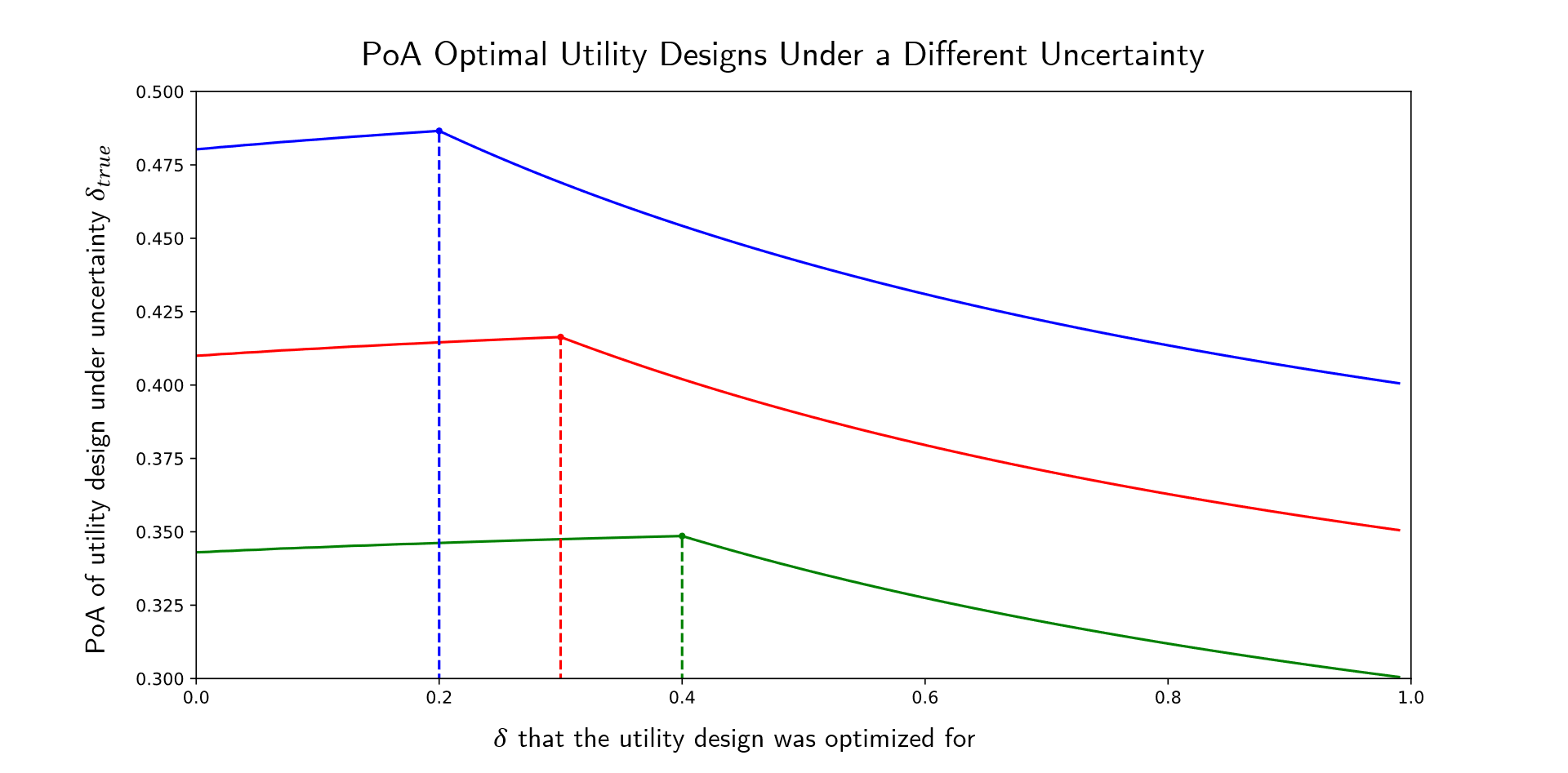}
    \caption{We plot the price of anarchy achieved by the utility rules designed for $\delta\in[0,1]$ within three classes of set covering games corresponding with $\dtr=0.2$ (in blue), $\dtr=0.3$ (in red), and $\dtr=0.4$ (in green). The explicit expression for such curves is provided in \eqref{eq:opucurve}. Observe that underestimating the true level of uncertainty in the class of games gives better price of anarchy guarantees than overestimating it, a trend that we previously noted from the simulation results in Figure \ref{fig:dropoffsub}.}
    \label{fig:dropsetcov}
    \vspace{-12pt}
\end{figure}

In the context of set covering games with uncertainty in the state of the resources, we have formally shown the surprising fact that underestimating these values actually has better performance guarantees than overestimating.

\section{Conclusion}
\label{sec:conc}
In this paper, we consider the problem of designing distributed algorithms for multi-agent scenarios under informational inconsistencies. This is necessary when the true system cannot be known or well-estimated, either due to the computational or communication overhead. We study such incomplete information scenarios in the context of local resource allocation games, and use the price of anarchy to evaluate the quality of the emergent Nash equilibrium of the system. We first outline a tractable linear program to give a tight characterization of the price of anarchy under different levels of system uncertainties as well as a tractable linear program to outline the optimal utility design. We then fully characterize the price of anarchy of the special subclass of set covering games with uncertainty and we examine the effect of uncertainty on optimality of the design utilities. This paper is an important first step to providing design strategies for multi-agents scenarios with more realistic assumptions on the information available to the agents.

\bibliographystyle{ieeetr}
\bibliography{bibliography.bib}

\appendix
\begin{proof}[Proof of Theorem \ref{thm:LP}]
First, we show that $\poa(\Gm_{w, u}^{\delta})$ is lower bounded by $1/V^*$. Consider the reduced family of games $\Gm_{w, u}^{\delta, 2} \subset \Gm_{w, u}^{\delta}$ where all the agents have only two actions; $\ac_i = \{a_i^{\ns}, a_i^{\opt} \}$. In this reduced game, $a^{\ns}$ corresponds to a Nash equilibrium and $a^\opt$ corresponds to the action that maximizes the welfare. Note that $\poa(\Gm_{w, u}^{\delta}) = \poa(\Gm_{w, u}^{\delta, 2})$ and that for any game $G\in\Gm_{w, u}^{\delta}$, uniformly scaling the values $y_r$ such that $\wf(a^{\ns}) = \sum_{r \in \rs}{y_r w(|a^\ns_r|)} = 1$ does not affect the price of anarchy. It follows that $\poa(\Gm_{w, u}^{\delta}) = 1/W^*$ where 
\begin{equation} \label{eq:gform}
\begin{aligned}
    W^* = & \max_{\gm \in \Gm_{w, u}^{\delta, 2}} \ \wf(a^{\opt}) \\
    \text{s.t.} \ & \ui(a^{\ns}; \delta) \geq \ui(a_i^{\opt}, a_{-i}^{\ns}; \delta), \quad  \forall i \in \ps \\
    & \wf(a^{\ns}) = 1.
\end{aligned}
\end{equation}
Observe that, as written, the above linear program is intractable as there are infinitely many games in $\Gm_{w, u}^{\delta, 2}$. To reduce the complexity, we define a game parameterization based on $n$ partitions of the set of resources, defined as follows for each agent $i\in\ps$:
\begin{align*}
    \rs_{a^{\ns}_i}                 & = \{r \in \rs : r \in a^{\ns}_i \setminus a^{\opt}_i\}, \\
    \rs_{a^{\opt}_i}                & = \{r \in \rs : r \in a^{\opt}_i \setminus  a^{\ns}_i\}, \\
    \rs_{a^{\opt}_i \cap a^{\ns}_i} & = \{r \in \rs : r \in a^{\opt}_i \cap a^{\ns}_i\}, \\
    \rs_{a^{\varnothing}_i}         & = \{r \in \rs : r \notin a^{\opt}_i \cup a^{\ns}_i\}.
\end{align*}

Now consider an arbitrary game $\gm \in \Gm_{w, u}^{\delta, 2}$ with resources $\rs$, agent valuations $y_i$ for each agent $i$ and true resource values $\xtr$. We can rewrite the Nash constraint in \eqref{eq:gform} for each agent $i\in\ps$ as
\begin{equation*}
    \sum_{r \in a_i^{\ns}} y^i_r \cdot u(|a^{\ns}_r|) \geq \sum_{r \in a_i^{\opt}} y^i_r \cdot u(|(a_i^{\opt}, a_{-i}^{\ns})_r|).
\end{equation*}
Under the partition defined for each agent $i$, we observe that the Nash condition can be rewritten as
\begin{align*}
    & \sum_{r \in \rs_{a^{\ns}_i}} y^i_r \cdot u(|a^{\ns}_r|) + \sum_{r \in \rs_{a^{\opt}_i \cap a^{\ns}_i}} y^i_r \cdot u(|a^{\ns}_r|)  \\
    \geq \ & \sum_{r \in \rs_{a^{\opt}_i}} y^i_r \cdot u(|(a_i^{\opt}, a_{-i}^{\ns})_r|) \\
    & \quad + \sum_{r \in \rs_{a^{\opt}_i \cap a^{\ns}_i}} y^i_r \cdot u(|(a_i^{\opt}, a_{-i}^{\ns})_r|).
\end{align*}
Canceling the terms in $\rs_{a^{\opt}_i \cap a^{\ns}_i}$, we get
\begin{equation*}
    \sum_{r \in \rs_{a^{\ns}_i}} y^i_r \cdot u(|a^{\ns}_r|) 
    \geq \sum_{r \in \rs_{a^{\opt}_i}} y^i_r \cdot u(|(a_i^{\opt}, a_{-i}^{\ns})_r|).
\end{equation*}
Note that, for any resource $r\in\rs$, it must hold that $y^i_r \in [(1-\delta)  y_{\rm{true}, r}, (1+\delta) y_{\rm{true}, r}]$. Considering the Nash condition as written above, observe that the tightest constraint arises when $y^i_r = (1+\delta) y_{\rm{true}, r}$ for all $r \in \rs_{a^{\ns}_i}$, $y^i_r = (1-\delta) y_{\rm{true}, r}$ for all $r \in \rs_{a^{\opt}_i}$ and $y^i_r = y_{\rm{true}, r}$ for all other resources. This is the situation where agents overvalue the resources in their equilibrium actions and undervalue the resources in their optimal actions. Thus, we can consider this situation without loss of generality. For each resource $r\in\rs$, we define the triplet $(a_r,x_r,b_r)\in\N^3$ as $a_r = |\{i\in\ps : r \in \rs_{a^{\ns}_i}\}|$, $b_r = |\{i\in\ps : r \in \rs_{a^{\opt}_i}\}|$, and $x_r=|\{i\in\ps : r \in \rs_{a^{\opt}_i\cap a^{\ns}_i}\}|$ where $1\leq a_r+x_r+b_r\leq n$ must hold. Further, we define the map $\theta:\N^3\to\R$ such that $\theta(a,x,b)$ is equal to the sum over true values $y_{{\rm true},r}$ for all resources with $a_r=a$, $x_r=x$ and $b_r=b$, for all $a,x,b\in\N$ with $1\leq a+x+b\leq n$. Under this notation, the following expressions hold:
\begin{align*}
    \wf(a^{\opt}) & = \sum_{a,x,b} w(b+x)\theta(a,x,b), \\
    \wf(a^{\ns})  & = \sum_{a,x,b} w(a+x)\theta(a,x,b).
\end{align*}
We showed above that the tightest Nash condition arises when agents overvalue the resources that they select in their equilibrium actions and undervalue resources in their optimal actions. Under our parameterization, the sum over agents' utilities in this ``tightest'' scenario are as follows:
\begin{align*}
    \sum^n_{i=1} \ui(a^{\ns};\delta)  & = \sum_{a,x,b} [(1+\delta)a + x]  u(a+x) \theta(a,x,b), \\
    \sum^n_{i=1} \ui(a^{\opt}_i,a^{\ns}_{-i};\delta) & 
                = \sum_{a,x,b} (1-\delta)bu(a+x+1)\theta(a,x,b) \\
                    & \qquad + \sum_{a,x,b} xu(a+x) \theta(a,x,b).
\end{align*}
Observe that if the equilibrium condition in \eqref{eq:gform} holds then the sum over agents' utilities at equilibrium must be greater than or equal to the sum over each of their utilities after they unilaterally deviate. The converse, however, need not hold in general. Thus, the linear program \eqref{eq:LP1} in the claim represents a relaxation of the linear program \eqref{eq:gform}. 

Let $V^*$ and $W^*$ be the optimal values of linear programs \eqref{eq:LP1} and \eqref{eq:gform}, respectively. According to the proof thus far, we can only say that $V^*\geq W^*$ since $V^*$ is the optimal value of a relaxed linear program, which means that $\poa(\Gm_{w, u}^{\delta})\geq 1/V^*$. To show that $\poa(\Gm_{w, u}^{\delta})\leq 1/V^*$ also holds, one can follow the approach outlined in \cite{paccagnan2018distributed}, which we omit here due to space constraints.
\end{proof}

\begin{proof}[Proof of Proposition \ref{prop:setpoa}]
This proof is inspired by Theorem~3 in \cite{paccagnan2021utility} with added consideration for the informational inconsistencies between the agents, and is included for completeness. First, we write the Lagrange dual of the linear program \eqref{eq:LP1}, i.e., $\poa(\Gm_{w_{sc}, u}^{\delta})=1/\mu^*$ where
{ \small
\begin{align*}
    \mu^* & = \min_{\lambda \geq 0,  \ \mu \in \R} \mu \ \mathrm{s.t.} \  \\
    & \mu w(a + x) \geq w(b + x) + \lambda \big[\cfdel a u(a + x)  - b u(a + x + 1)\big], \\
    & \hspace*{90pt} \forall a, x, b \in \mathbb{N} \ \mathrm{s.t.} \  1 \leq a + x + b \leq n.
\end{align*}
}%
The rest of the proof involves removing redundant constraints to obtain a closed form expression of the price of anarchy. We first consider the constraints that arise from $a = x = 0$ and $b \geq 1$. By definition, $w(k) = 1$ if $k \geq 1$ and $0$ otherwise, giving the constraint $\lambda \geq \max_{b \in [n]} \frac{1}{b} = 1$. Considering the set of constraints that arise from $b, x = 0$ and $a \geq 1$ gives $\mu \geq \lambda \cfdel a u(a) \ \mathrm{for} \ a \in [n]$. Now evaluating the set of constraints that arise from $x = 0$, $a, b \geq 1$ gives
\begin{align*}
    \mu &\geq \max_{a + b \in [2, n]} 1 + \lambda\big[\cfdel a u(a)  - b u(a + 1) \big] \\
    &\geq \max_{a \in [1, n-1]} 1 + \lambda\big[\cfdel a u(a)  - u(a + 1) \big],
\end{align*}
which holds since $b = 1$ is the most binding constraint. When we consider the constraints that arise from $a, x, b \geq 1$, the resulting set of constraints can be written as
\begin{align*}
    \mu &\geq \max_{a + b + x \in [3, n]} 1 + \lambda\big[\cfdel a u(a + x)  - b u(a + x + 1) \big] \\
    &\geq \max_{a + x \in [2, n]} 1 + \lambda \cfdel a u(a + x) \\
    &\geq \max_{a \in [1, n-1]} 1 + \lambda \cfdel a u(a + 1),
\end{align*}
where $b = 0$ and $x = 1$ is the most binding constraint. Setting $b = 0$ is binding since it removes the negative term $- b u(a + x + 1)$ from the expression. Setting $x = 1$ is binding, since for any pair $\{a, x\} \in [2, n]$, there is another pair $\{a+x-1, 1\}$ that results in stricter constraint. With the nonbinding constraints removed, the program reduces to
\begin{align*}
    &\min_{\lambda \geq 1,  \ \mu \in \R} \mu \ \mathrm{s.t.} \  \\
      \mu &\geq \lambda \cfdel a u(a) &a \in [n] \\
      \mu &\geq 1 + \lambda\big[\cfdel a u(a)  - u(a + 1) \big]] &a \in [n-1] \\
      \mu &\geq 1 + \lambda \cfdel a u(a + 1)  &a \in [n-1]
\end{align*}
The optimal dual variables have $\lambda = 1$, which comes from the tightest constraints. If we assume $a = 1$, this results in the set of constraints
\begin{align*}
    \mu &\geq \cfdel 1 u(1)  = \cfdel \\
    \mu &\geq 1 + \cfdel 1 u(1)  - u(2)  = \cfdel + (1 - u(2)) \\
    \mu &\geq 1 + \cfdel 1 u(2)  = 1 + \cfdel u(2).\\
\end{align*}
We can see the first constraint is always redundant, no matter if $u(2) \geq 1$ or $u(2) \leq 1$. The expression in the claim follows after removing the last nonbinding constraint and shifting the index from $a$ to $a+1$ for the first set of constraints.
\end{proof}

\begin{proof}[Proof of Theorem \ref{thm:foptsetcov}]
First we show that the price of anarchy is lower bounded by the proposed formula  with the corresponding proposed $f$. We assume that the number of agents is $n$. From Proposition \ref{prop:setpoa}, we have that 
\begin{align*}
    &\poa^{-1}(\Gm_{w_{sc}, u}^{\delta}) \leq \mc{X}  \mathrm{\ for \ any \ } \mathcal{X} \mathrm{\ s.t. \ } \\
    \mathcal{X} &\geq \cfdel (n-1) u(n) + 1, \\
        \mathcal{X} &\geq \cfdel j u(j) - u(j+1) + 1 \quad j \in [1, n-1]
\end{align*}
where we removed the first set of constraints, and all but the last one of the second constraints. An optimal utility design satisfies the set of inequalities with equality as follows
\begin{align}
    \label{eq:optchar1}
    \mathcal{X} &= \cfdel (n-1) \opu(n) + 1 \\
    \label{eq:optchar2}
    \mathcal{X} &= \cfdel j \opu(j) - \opu(j+1) + 1 \quad j \in [1, n-1].
\end{align}
We can reformulate this system of equations as a recursive formula to generate the optimal utility design $\opu$ as follows
\begin{align}
    \label{eq:optrec1}
    \opu(n) & = 1 \\
    \label{eq:optrec2}
    \opu(j) & = \frac{\opu(j+1)}{\cfdel j} + \frac{1}{j} \opu(n)(n-1).
\end{align}
Iterating through this recursive equation and normalizing so that $\opu(1) = 1$ gives 
{\small
\begin{equation}
    \label{eq:setcovoptn}
    \opu(j) = \frac{\cfdel^{j-1} (j-1)! (\frac{1}{\cfdel^n (n-1)(n-1)!} + \sum_{k=j}^{n-1}{\frac{\cfdel^{-k}}{k!}})}{\frac{1}{\cfdel^n (n-1)(n-1)!} + \sum_{k=1}^{n-1}{\frac{\cfdel^{-k}}{k!}}},
\end{equation}
}%
with a corresponding price of anarchy expression of 
\begin{equation*}
    \poa(\Gm_{w_{sc}, \opu}^{\delta}) \geq 1 - \frac{1}{\frac{1}{\cfdel^n (n-1)(n-1)!} + \sum_{k=0}^{n-1}{\frac{\cfdel^{-k}}{k!}}}. 
\end{equation*}

Taking the limit as $n \to \infty$ and using the identity $\sum_{k=0}^{\infty}{\frac{\cfdel^{-k}}{k!}} = e^{\frac{1}{\cfdel}}$, we observe that $\opu$ corresponds to the expression in \eqref{eq:setcovoptu} and $\poa(\Gm_{w_{sc}, \opu}^{\delta}) \geq 1- e^{-\frac{1}{\cfdel}}$. 

For the upper bound, we construct an $n$ agent worst case set covering game $\gm^*$ inspired by \cite{gairing2009covering}. All agents have two actions with $\ac_i = \{a_i^\ns, a_i^\opt\}$, coinciding with their equilibrium and optimal actions. To state the allocations of $a^\ns$ and $a^\opt$ concisely, we specify each resource with unique label $\ell: \rs \to 2^n$ as follows. First we partition the resources into $n+1$ groups, $\{\rs_0, \dots, \rs_n\}$. The true value of each resource $r \in \rs_k$ is $y_{\rm{true}, r} = (\cfdel)^k$. There is one resource $r_0 \in \rs_0$ with $\ell(r_0) = \{1\}$. For $k \geq 1$, the set of labels of the resources in $\rs_k$ is exactly the set $[2, n] \times \Perm{n-1}{k-1}$, i.e., the set of permutations without $\{1\}$ as the first element. Therefore, there are $(n-1)\frac{(n-1)!}{(n - k)!}$ in $\rs_k$. For any resource $r \in \rs_k$ with $k \geq 1$, the last element of the label $\ell(r)$ denotes which agent selects the resource $r$ in $a^\opt$ and $\{j \in \ps: j \notin \ell(r)\}$ denotes the set of agents that select the resource $r$ in $a^\ns$. For the resource $r_0 \in \rs_0$, agent $1$ selects it in $a^\opt$, and every agent selects it in $a^\ns$. For example, if the label for the resource $r$ is $\ell(r) = \{2, 3\}$ for a game with $4$ agents, then it must be that $r\in\rs_k$ with $y_{\rm{true}, r} = (\cfdel)^2$. Furthermore, agent $3$ selects $r$ in $a^\opt$, while agents $1$ and $4$ select $r$ in $a^\ns$. 

For any resource $r \in \rs_k$ in $\gm^*$, $n-k$ agents select $r$ in $a^\ns$. Furthermore, for any agent $i \geq 2$ and $k \leq n$, the number of resources in $\rs_k$ that are selected in $a_i^\opt$ (denoted as $|\rs^\opt_{i, k}|$) and the number of resources in $\rs_{k-1}$ that are selected in $a_i^\ns$ (denoted as $|\rs^\ns_{i, k-1}|$) are equal. For agent $1$ and $k \geq 2$, it holds that $|\rs^\opt_{1, k}| = |\rs^\ns_{1, k-1}|$. However, it is important to note that for agent $1$, $|\rs^\opt_{1, 1}| = 0$ while $|\rs^\opt_{1, 0}| = |\rs^\ns_{1, 0}| = 1$. 

The agent valuations $y_{i, r}$ for the resources in $\gm^*$ are as follows for a fixed $\delta$ uncertainty. If $r \in a_i^\ns$, then agent $i$ overvalues it to the extreme where $y_{i, r} = (1 + \delta) y_{\rm{true}, r}$ and if $r \in a_i^\opt$, then agent $i$ undervalues it to the extreme, where $y_{i, r} = (1 - \delta) y_{\rm{true}, r}$. The only exception to this is for agent $1$ and the resource $r_0 \in \rs_0$ where $y_{i, r_0} = y_{\rm{true}, r_0}$ since it is selected in both the optimal and equilibrium allocations by agent $1$.

Now we can verify $a^\ns$ is indeed an equilibrium allocation. For any agent $i\in\ps$, we have
\begin{align*}
    \ui(a^\ns; y_i) &= \sum_{k = 0}^{n} \sum_{r \in \rs^\ns_{i, k}} y_r^i u(n - k)\\
    &= \sum_{k = 0}^{n} \sum_{r \in \rs^\ns_{i, k}} (1 + \delta)  y_{\rm{true}, r} u(n - k) \\
    &= \sum_{k = 0}^{n} |\rs^\ns_{i, k}| (1 + \delta) (\cfdel)^k u(n - k) \\
\end{align*}
\begin{align*}
    &= \sum_{k = 0}^{n} |\rs^\opt_{i, k + 1}| (1 - \delta) (\cfdel)^{k+1} u(n - (k + 1) + 1) \\
    &= \sum_{k = 0}^{n} |\rs^\opt_{i, k}| (1 - \delta)  y_{\rm{true}, r} u(n - k + 1) \\
    &= \ui(a_i^\opt, a_{-i}^\ns; y_i),
\end{align*}

where we take advantage of the fact that no resources in $\rs_{n}$ are selected in $a^\ns$ (i.e., $|\rs^\ns_{i, n}| = 0$) in the fourth equality and that no agents $i \geq 1$ select the resource $r_0 \in \rs_0$ in their optimal allocation (i.e., $|\rs^\opt_{i, 0}| = 0$) for the fifth equality. We can use a similar argument for agent $1$ with additional care taken for the resources in $\rs_0$ and $\rs_1$. Its important to note that under any utility design $u$, the action $a^\ns$ is still an equilibrium and the allocations $a^\ns$ and $a^\opt$ do not change.

Under allocation $a^\ns$ in $\gm^*$, all resources in $\rs_k$ for $k \leq n-1$ are covered while, under the optimal allocation, all resources are covered. We can explicitly write the welfare at both allocations as
{\small
\begin{align*}
    W(a^{\ns}) = \sum_{r \in \rs}{y_{\rm{true}, r} w(|a^\ns_r|)} = 1+ \sum_{k = 1}^{n-1} (n-1) \frac{\cfdel^k (n-1)!}{(n-k)!} \\
    W(a^{\opt}) = \sum_{r \in \rs}{y_{\rm{true}, r} w(|a^\opt_r|)} = 1 + \sum_{k = 1}^{n} (n-1) \frac{\cfdel^k (n-1)!}{(n-k)!}
\end{align*}
}%
Therefore, a lower bound on the price of anarchy is
\begin{equation*}
    \poa(\gm^*) \geq \frac{W(a^{\ns})}{W(a^{\opt})} = 1 - \frac{1}{\frac{1}{\cfdel^n (n-1)(n-1)!} + \sum_{k=0}^{n-1}{\frac{\cfdel^{-k}}{k!}}}.
\end{equation*}

Earlier, we showed that $\poa(\Gm_{w_{sc}, u}^{\delta}) \leq \poa(\gm^*)$ for any utility design. We just showed that $\poa(\Gm_{w_{sc}, \opu}^{\delta}) \geq \poa(\gm^*)$. It follows that the utility $\opu$ defined in \eqref{eq:setcovoptn} is optimal. Furthermore, taking the limit as $n \to \infty$ gives $\poa(\Gm_{w_{sc}, \opu}^{\delta}) \leq \poa(\gm^*) = 1- e^{-\frac{1}{\cfdel}}$.
\end{proof}

\begin{proof}[Proof of Proposition \ref{prop:setmisalign}]
We first assume there are $n$ agents and note that as $\delta \to 1$, the recursive formula in \eqref{eq:optrec1} and \eqref{eq:optrec2} for the optimal utility design, normalized to $\opu(1) = 1$, gives $\opu(j) = \frac{1}{j}$ for $j =  1, \dots, n-1$ and $\opu(n) = \frac{1}{n-1}$. Additionally, observe that as $\delta$ increases, $\opu(j)$ increases for any $j$, since the recursive formula in \eqref{eq:optrec2} produces a slower increasing sequence for a higher $\delta$, so normalizing to $\opu(1) = 1$ gives a larger $\opu(j)$. Thus $\opu(j) \leq \frac{1}{j}$ for $j =  1, \dots, n-1$ for any $\delta$. Note that based on the recursive formula in \eqref{eq:optrec2}, $\opu(j)$ is decreasing in $j$ for any $\delta$. By Proposition \ref{prop:setpoa}, the $\poa(\Gm_{w_{sc}, \opu}^{\dtr})^{-1} = \mathcal{X}$ where $\mc{X}$ is the lowest value satisfying
{\small
\begin{align}
    \label{eq:setinequal1}
    \mathcal{X} &\geq \cfdeltr (j+1) \opu(j+1) &j \in [1, n-1]\\
    \label{eq:setinequal2}
    \mathcal{X} &\geq \cfdeltr j \opu(j+1) + 1 &j \in [1, n-1]\\
    \label{eq:setinequal3}
    \mathcal{X} &\geq \cfdeltr j \opu(j) - \opu(j+1) + 1 &j \in [1, n-1]
\end{align}
}%
Now the redundant inequalities are eliminated to derive a closed form expression. For the  inequalities in \eqref{eq:setinequal1}, we have that 
{\small
\begin{align*}
    \cfdeltr (j+1) \opu(j+1) &\leq \cfdeltr \\
    &\leq \cfdeltr - \opu(2) + 1 \ \  j \in [1,  n-2],
\end{align*}
}%
where the first inequality comes from $\opu(j+1) \leq 1/(j+1)$ and the second inequality comes from $\opu(2) \leq \opu(1) = 1$. Note that putting $j=1$ in the last set of inequalities \eqref{eq:setinequal3} gives the last term. For $j = n-1$, 
\begin{align}
    \cfdeltr n \opu(n) &= \cfdeltr (n-1) \opu(n) + \cfdeltr \opu(n) \nonumber \\
    &\leq \cfdeltr (n-1) \opu(n) + 1 \nonumber \\
    &\leq \cfdeltr \cfdel^{-1}(C - 1) + 1, \label{eq:cons1diffd}
\end{align} 
where the first inequality comes from the fact that $\cfdeltr \opu(n) \leq n$ for a high enough $n$ and the second inequality comes from the substitution $C - 1 = \cfdel (n-1) \opu(n)$ from Equation \eqref{eq:optchar1}.  Note that this corresponds to putting $j = n -1$ in the second set of inequalities  \eqref{eq:setinequal2}. Therefore, we have shown the first set of inequalities is redundant.

For the inequalities in \eqref{eq:setinequal2}, we have that for $j \in [1, n-2]$,
\begin{align*}
    \cfdeltr j \opu(j+1) + 1 &= \cfdeltr(j+1)\opu(j+1) \\
    &- \cfdeltr \opu(j+1) + 1 \\
    &\leq \cfdeltr(j+1)\opu(j+1) \\
    &- \opu(j+2) + 1,
\end{align*}
where the first inequality comes from the fact that $\cfdeltr \opu(j+1) \geq 1 \cdot \opu(j+2)$. Note that this expression matches the inequalities in \eqref{eq:setinequal3} for $j \in [1, n-2]$ and therefore are redundant.

We can also reduce the inequalities in \eqref{eq:setinequal3}:
\begin{align*}
    \cfdeltr j \opu(j) - \opu(j+1) + 1 = \\
    (\cfdeltr \cfdel^{-1} - 1) \opu(j + 1) +  (\cfdeltr \cfdel^{-1})(C - 1) + 1 
\end{align*}
where $C = \poa(\Gm_{w_{sc}, \opu}^{\delta})^{-1}$. The equality comes from the recursive formula in \eqref{eq:optchar2} with substitution $\cfdel j \opu(j) = \opu(j+1) + C - 1$. If $\delta \leq \dtr$, then $\cfdeltr \cfdel^{-1} - 1 \geq 0$ and the binding constraint comes from taking $j = 1$,
\begin{equation*}
    (\cfdeltr \cfdel^{-1} - 1) \opu(2) +  (\cfdeltr \cfdel^{-1})(C - 1) + 1.
\end{equation*}
Conversely if $\delta \geq \dtr$, the binding constraint comes from $j = n -1$,
\begin{align*}
    &(\cfdeltr \cfdel^{-1} - 1) \opu(n) +  (\cfdeltr \cfdel^{-1})(C - 1) + 1 \\
    &\leq \cfdeltr \cfdel^{-1}(C - 1) + 1,
\end{align*}
where the inequality comes from $(\cfdeltr \cfdel^{-1} - 1) \leq 0$ for $\delta \geq \dtr$. Note that this constraint is subsumed by the one in \eqref{eq:cons1diffd}.

Finally, the resulting set of inequalities is
\begin{align*}
    \mathcal{X} &\geq \cfdeltr \cfdeltr^{-1}(C - 1) + 1\\
    \mathcal{X} &\geq (\cfdeltr \cfdel^{-1} - 1) \opu(2) +  (\cfdeltr \cfdel^{-1})(C - 1) + 1.
\end{align*}
We showed that the first constraint is strictest if $\delta \geq \dtr$ and that the second constraint is strictest when $\delta < \dtr$. Taking $n \to \infty$, we have from Theorem \ref{thm:foptsetcov} that $C - 1 = \poa(\Gm_{w_{sc}, \opu}^{\delta})^{-1} - 1 = (1- e^{-\frac{1}{\cfdel}})^{-1} - 1 = (e^{\frac{1}{\cfdel}} - 1)^{-1}$.
\end{proof}

\end{document}